\newcommand{\pdflambda}{\texorpdfstring{$λ$}{lambda}}
\newtheorem{theorem}{Theorem}[section]
\newtheorem{claim}[theorem]{Claim}
\newtheorem{corollary}[theorem]{Corollary}
\newtheorem{lemma}[theorem]{Lemma}
\newcommand{\recdef}{\Coloneqq}
\newcommand{\definitive}[1]{\textbf{#1}}
\NewDocumentCommand{\pars}{ O{} m }{#1(#2#1)}
\NewDocumentCommand{\vpars}{ m m m }
	{\IfBooleanTF{#1}{#2(#3#2)}{#3}}
\NewDocumentCommand{\braces}{ O{} m }{#1\{#2#1\}}
\NewDocumentCommand{\brackets}{ O{} m }{#1[#2#1]}
\NewDocumentCommand{\angles}{ O{} m }{#1\langle#2#1\rangle}
\NewDocumentCommand{\vbars}{ O{} m }{#1|#2#1|}
\newcommand\R{\mathsf{R}}
\newcommand\N{\mathbf{N}}
\newcommand{\LTerms}{\Lambda}
\newcommand{\RTerms}{\mathsf{L}_{\R}}
\newcommand{\STerms}{\Delta_{\R}}
\newcommand{\ATerms}{\R\angles{\Delta_{\R}}}
\NewDocumentCommand{\subst}{ s O{} m m O{} m }
    {\vpars{#1}{#2}{#3}\brackets[#5]{#6/#4}}
\NewDocumentCommand{\appl}{ s O{} m s O{} m }
	{\pars[#2]{#3}\,\vpars{#4}{#5}{#6}}
\NewDocumentCommand{\labs}{ m s O{} m }
	{\lambda{#1}.\vpars{#2}{#3}{#4}}
\NewDocumentCommand{\classOf}{ m }{\underline{#1}}
\newcommand{\zero}{\mathbf 0}
\newcommand{\sm}{.}
\NewDocumentCommand{\support}{ O{}m }{\mathsf{Supp}\pars[#1]{#2}}
\NewDocumentCommand{\LTermsOf}{ O{}m }{\LTerms\pars[#1]{#2}}
\NewDocumentCommand{\full}{osO{}m}{\vpars{#2}{#3}{#4}\mathord\downarrow\IfNoValueF{#1}{^{#1}}}
\newcommand{\bred}{\to_\Lambda}
\newcommand{\aeq}{\leftrightarrow}
\newcommand{\beq}{\leftrightarrow_\Lambda}
\newcommand{\sred}{\to}
\newcommand{\ared}{\mathrel{\widetilde{\to}}}
\newcommand{\aredRT}{\ared^*}
\newcommand{\bredRT}{\bred^*}
\newcommand{\algeq}{\triangleq}
\NewDocumentCommand{\sdesc}{}{\vdash}
\NewDocumentCommand{\adesc}{}{\Vdash}
\newcommand{\ruleVar}{(v)}
\newcommand{\ruleAbs}{($λ$)}
\newcommand{\ruleAbsA}{($λ$')}
\newcommand{\ruleApp}{(a)}
\newcommand{\ruleAppA}{(a')}
\newcommand{\ruleZ}{($\zero$)}
\newcommand{\ruleSimple}{(s)}
\newcommand{\ruleSum}{($+$)}
\newcommand{\ruleSumA}{($+$')}
\newcommand{\coefA}{a}
\newcommand{\varA}{x}
\newcommand{\varB}{y}
\newcommand{\varC}{z}
\newcommand{\ltermA}{M}
\newcommand{\ltermB}{N}
\newcommand{\ltermC}{P}
\newcommand{\atermA}{σ}
\newcommand{\atermB}{τ}
\newcommand{\atermC}{ρ}
\newcommand{\rtermA}{\ltermA}
\newcommand{\rtermB}{\ltermB}
\newcommand{\rtermC}{\ltermC}
\title{The algebraic \pdflambda-calculus is a conservative extension of the ordinary \pdflambda-calculus}
\author{
    Axel Kerinec\inst{1}
\and
    Lionel Vaux Auclair\inst{2}\thanks{
        This work was partially supported by the French ANR project
        PPS (ANR-19-CE48-0014).
    }
}
\institute{
    Université Sorbonne Paris Nord, LIPN, CNRS UMR 7030, France
\and
    Aix-Marseille Université, CNRS, I2M, France
}
\authorrunning{Kerinec, Vaux Auclair}
\titlerunning{The algebraic \pdflambda-calculus is a conservative extension of the ordinary \pdflambda-calculus}
\begin{document}

\maketitle

\begin{abstract}
    The algebraic $λ$-calculus is an extension of the ordinary $λ$-calculus with
    linear combinations of terms.
    We establish that two ordinary $λ$-terms are equivalent in the algebraic
    $λ$-calculus iff they are $β$-equal.
    Although this result was originally stated in the early 2000's (in the
    setting of Ehrhard and Regnier's differential $λ$-calculus),
    the previously proposed proofs were wrong:
    we explain why previous approaches failed and
    develop a new proof technique to establish conservativity.
\end{abstract}

\section{Introduction}

The algebraic $λ$-calculus was introduced by the second author
\cite{vaux:alglam-rta,vaux:alglam} as a generic framework to study the
rewriting theory of the $λ$-calculus in presence of weighted superpositions of
terms.
The latter feature is pervasive in the quantitative semantics of
$λ$-calculus and linear logic, that have flourished in the past twenty years
\cite[\emph{etc.}]{
ehrhard:fs,lmmp:weighted,de:pcs,ccpw:concurrentppcf}
and the algebraic $λ$-calculus is meant as a unifying syntactic counterpart
of that body of works.

The algebraic $λ$-calculus was actually obtained by removing the
differentiation primitives from Ehrhard and Regnier's differential
$λ$-calculus~\cite{er:difflamb}, keeping only the dynamics associated with
linear combinations of terms.
This dynamics is surprisingly subtle in itself:
for instance, if $1$ has an opposite in the semiring $\R$ of coefficients,
then the rewriting theory becomes trivial.
We refer the reader to the original paper \cite{vaux:alglam} for a thorough
discussion, and focus on the question of conservativity only, assuming $\R$ is
\definitive{positive} --- \emph{i.e.} if $a+b=0$ then $a=b=0$.
We briefly outline the main definitions, keeping the same notations
as in the former paper, so that the reader can consistently refer to it
for a more detailed account if need be.

\paragraph{Overview of the algebraic $λ$-calculus.}
The syntax of algebraic $λ$-terms is constructed in two stages.
We first consider \definitive{raw terms}, which are terms
inductively generated as follows:
\[ 
    \RTerms\ni
    \rtermA, \rtermB,\dotsc \recdef \varA
    \mid \labs{\varA}{\rtermA}
    \mid \appl{\rtermA}{\rtermB}
    \mid \zero
    \mid \rtermA+\rtermB
    \mid \coefA\sm\rtermA
\]
where $\coefA$ ranges over the semiring $\R$
(beware that we use Krivine's convention for application).
We consider raw terms up to $α$-equivalence: $\RTerms$ contains the set
$\LTerms$ of pure $λ$-terms as a strict subset.
We then consider \definitive{algebraic equality} $\algeq$ on raw terms,
which is the congruence generated by the equations of $\R$-module, plus the
following linearity axioms:
\begin{align*}
    \labs{\varA}{\zero}
    &\algeq \zero 
    &
    \labs{\varA}*{\rtermA+\rtermB}
    &\algeq \labs{\varA}{\rtermA}+\labs{\varA}{\rtermB}
    &
    \labs{\varA}*{\coefA\sm\rtermA}
    &\algeq \coefA\sm\labs{\varA}{\rtermA}
    \\
    \appl{\zero}\rtermC
    &\algeq \zero 
    &
    \appl*{\rtermA+\rtermB}\rtermC
    &\algeq \appl{\rtermA}\rtermC+\appl{\rtermB}\rtermC
    &
    \appl{\coefA\sm\rtermA}\rtermC
    &\algeq \coefA\sm\appl{\rtermA}\rtermC
\end{align*}
which reflects the point-wise definition of the sum of functions.
Note that, without these equations, a term such as 
$\appl{\labs{\varA}\rtermA+\labs{\varA}\rtermB}{\rtermC}$ has no redex.

The terms of the algebraic $λ$-calculus, called \definitive{algebraic terms} below,
are then the $\algeq$-classes $\atermA=\classOf{\rtermA}$ of raw terms $\rtermA\in\RTerms$.
We extend syntactic constructs to algebraic terms (e.g.,
$\labs{\varA}{\classOf{\ltermA}}=\classOf{\labs{\varA}\ltermA}$,
which is well defined because $\algeq$ is a congruence).
Among algebraic terms, we distinguish the \definitive{simple terms}, which are
intuitively those without sums at top level:
a term $\atermA$ is simple if 
$\atermA=\classOf{\varA}$ for some variable $\varA$,
or $\atermA=\labs{\varA}{\atermB}$ or $\atermA=\appl{\atermB}{\atermC}$
where $\atermB$ is itself simple (inductively).
In particular, $\classOf{\ltermA}$ is simple as soon as $\ltermA\in\LTerms$.
By definition, algebraic terms form an $\R$-module,
and it is easy to check that it is freely generated by the
set $\STerms$ of simple terms:
we write $\ATerms$ for the module of algebraic terms.

A seemingly natural way to extend the $β$-reduction $\bred$ of $λ$-terms 
to algebraic terms is to define it contextually on 
raw terms, and then apply it \emph{modulo} $\algeq$:
among other issues with this naïve definition,
note that $M\algeq M+0\sm N$ would reduce to $M+0\sm N'\algeq M$
for any $N\bred N'$, so that the obtained reduction 
would be reflexive and there would be no clear notion of normal form.
Ehrhard and Regnier's solution is to rather consider two relations:
$\mathord{\sred}\subset \STerms\times\ATerms$ 
defined contextually on simple terms with $β$-reduction
as a base case;
and $\mathord{\ared}\subset \ATerms\times\ATerms$ 
on algebraic terms, obtained by setting $σ\ared σ'$ iff $σ=a\sm τ+ρ$
and $σ'=a\sm τ'+ρ$ with $τ\sred τ'$ and $a\not=0$.
Then $\ared$ is confluent \cite{er:difflamb,vaux:alglam} and, \emph{provided
$\R$ is positive}, an algebraic term is in normal form iff it is the class of a
raw term without $β$-redex.

Note that, for any fixed point combinator $Y$,
by setting $\infty_{σ}=\appl{\classOf{Y}}{\labs{\varA}*{σ+\classOf{\varA}}}$,
we obtain $\infty_{σ}\aeq σ+\infty_{σ}$ where
$\aeq$ is the equivalence on $\ATerms$ generated by the reduction relation $\ared$.
In case $1\in\R$ has an opposite $-1$,
we can now exhibit the above-mentioned inconsistency of the theory:
$\classOf{\zero}=\infty_{σ}+(-1)\sm\infty_{σ}\aeq σ$ for any $σ$.
From now on, we thus assume that $\R$ is positive.

\paragraph{Contributions.}
Our goal is to establish that, for any two $λ$-terms $\ltermA$ and $\ltermB\in\LTerms$,
we have $\classOf{\ltermA}\aeq\classOf{\ltermB}$ iff $\ltermA\beq\ltermB$,
where $\beq$ is the usual $β$-equivalence on $λ$-terms.
For that purpose, it is sufficient to establish a conservativity result on
reduction relations rather than on the induced equivalences:
if $\classOf{\ltermA}\aredRT\classOf{\ltermB}$ then $\ltermA\bredRT\ltermB$.
This is our main result, \cref{theorem:conservativity} below.

In the next section, we explain what was wrong with the previous two attempts,
first by Ehrhard and Regnier, then by the second author, to establish
conservativity, and we outline the new proof strategy we propose.
The rest of the paper is dedicated to the proof of \cref{theorem:conservativity}.%
\footnote{
These results were obtained during a research internship of the first author,
in the first half of 2019; they were presented by the second author at the
annual meeting of the working group Scalp (\emph{Structures formelles pour le
Calcul et les Preuves}) in Lyon in October 2019.
This collaboration was unfortunately disrupted by the COVID-19 pandemic in the
following year, which delayed dissemination to a wider audience.
}

\section{Two non-proofs and a new approach}

Recall that an \definitive{ARS} (abstract rewriting system) is a pair
$(A,\leadsto)$ of a set $A$ and binary relation
$\mathord{\leadsto}\subseteq A\times A$.
An \definitive{extension} of $(A,\leadsto)$ is another ARS $(A',\leadsto')$
such that $A\subseteq A'$ and $\mathord{\leadsto}\subseteq\mathord{\leadsto'}$.
This extension is \definitive{conservative} if, for every $a_1,a_2\in A$,
$a_1\leadsto a_2$ iff $a_1\leadsto' a_2$.
An \definitive{equational system} is an ARS $(A,\sim)$ such that $\sim$
is an equivalence relation.
Our goal is thus to establish that the equational system
$(\ATerms,\aeq)$ is a conservative extension of $(\LTerms,\beq)$
--- here we consider the injection
$\ltermA\in\LTerms\mapsto\classOf{\ltermA}\in\ATerms$
as an inclusion.

In their paper on the differential $λ$-calculus \cite{er:difflamb},
Ehrhard and Regnier claim that this follows directly from the confluence of
$\ared$, but this argument is not valid:
$\ared$ does contain $\bred$, and it is indeed confluent,
without any positivity assumption;
but we have already stated that 
$\aeq$ is inconsistent in presence of negative coefficients,
so this observation cannot be sufficient.

Ehrhard and Regnier's mistake is certainly an erroneous application
of a general conservativity result in Terese's textbook \cite{terese},
missing the fact that Terese's notion of extension is more demanding:
for Terese, $(A,\leadsto)$ is a \definitive{sub-ARS} of $(A',\leadsto')$ if 
$A\subseteq A'$ and, for every $a\in A$ and $a'\in A'$,
$a\leadsto' a'$ iff $a'\in A$ and $a\leadsto a'$.
The latter is strictly stronger than the mere inclusion
$\mathord{\leadsto}\subseteq\mathord{\leadsto'}$,
and is indeed sufficient to deduce conservativity
for the induced equational systems from the confluence of the super-ARS
\cite[Exercice 1.3.21 (iii)]{terese}.
But $(\LTerms,\bred)$ is not a sub-ARS of $(\ATerms,\ared)$,
even when $\R$ is positive:
for instance, if $\R=\mathbf Q^+$ and $\ltermA\bred \ltermA'\not=\ltermA$,
we have $\classOf{\ltermA}=\frac 12\classOf{\ltermA}+\frac 12\classOf{\ltermA}
\ared \frac 12\classOf{\ltermA}+\frac 12\classOf{\ltermA'}\not\in\classOf{\LTerms}$.
So one must design another approach.

Given $\atermA\in\ATerms$, one can consider the
finite set of $λ$-terms $\LTermsOf{\atermA}\subset\LTerms$
obtained by keeping exactly one element
in the support of each sum occurring in $\atermA$
\cite[Definition 3.18]{vaux:alglam}.
The second author tried to establish the conservativity of $\aredRT$ 
over $\bredRT$ by iterating the following:
\begin{claim}[{\cite[Lemma 3.20]{vaux:alglam}}]
    \label{claim:wrong}
    If $\atermA\ared\atermA'$ and $\ltermA'\in\LTermsOf{\atermA'}$
    then there exists $\ltermA\in\LTermsOf{\atermA}$ such that
    $\ltermA\bredRT\ltermA'$.
\end{claim}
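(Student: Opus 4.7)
The plan is to proceed by induction on the derivation of $\atermA\ared\atermA'$. By the definition of $\ared$ we can decompose $\atermA=\coefA\sm\atermB+\atermC$ and $\atermA'=\coefA\sm\atermB'+\atermC$ with $\coefA\neq 0$, $\atermB\in\STerms$, and $\atermB\sred\atermB'$. Any $\ltermA'\in\LTermsOf{\atermA'}$ arises by first selecting a simple term $\atermB'_0\in\support{\atermA'}$ and then picking $\ltermA'\in\LTermsOf{\atermB'_0}$, so the first dichotomy is whether $\atermB'_0$ lives in the support of $\atermC$ or in that of $\atermB'$.

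If $\atermB'_0\in\support{\atermC}$, positivity of $\R$ ensures that $\atermB'_0\in\support{\atermA}$ as well (no negative coefficient from $\coefA\sm\atermB$ can cancel it), so $\ltermA'\in\LTermsOf{\atermA}$ already and we take $\ltermA=\ltermA'$. Otherwise $\atermB'_0\in\support{\atermB'}$, and since $\atermB\in\support{\atermA}$ by the same positivity argument, any $\ltermA\in\LTermsOf{\atermB}$ will lie in $\LTermsOf{\atermA}$. The claim then reduces to a ``simple'' variant: if $\atermB\sred\atermB'$ with $\atermB$ simple, then every $\ltermA'\in\LTermsOf{\atermB'}$ is reached by $\bredRT$ from some $\ltermA\in\LTermsOf{\atermB}$.

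This simple variant is attacked by sub-induction on $\atermB\sred\atermB'$. The contextual cases, where the redex sits under an abstraction or on either side of an application, should be routine: they follow from the induction hypothesis together with the fact that $\LTermsOf{-}$ commutes with $\labs{\varA}{-}$ and with $\appl{-}{-}$. The delicate case is the $\beta$-step itself: $\atermB=\appl{\labs{\varA}{\atermA_1}}{\atermA_2}$ with $\atermB'=\subst{\atermA_1}{\varA}{\atermA_2}$. The natural attempt is to read off, from $\ltermA'\in\LTermsOf{\subst{\atermA_1}{\varA}{\atermA_2}}$, a pair of selections $\ltermA_1\in\LTermsOf{\atermA_1}$ and $\ltermA_2\in\LTermsOf{\atermA_2}$ such that $\ltermA'=\subst{\ltermA_1}{\varA}{\ltermA_2}$, and then to take $\ltermA=\appl{\labs{\varA}{\ltermA_1}}{\ltermA_2}\in\LTermsOf{\atermB}$, which $\beta$-reduces in one step to $\ltermA'$.

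This base case is precisely where the plan should be expected to falter, and is the main obstacle. When $\varA$ occurs several times in $\atermA_1$, each occurrence is replaced in the substitution by an independent copy of $\atermA_2$, and $\LTermsOf{-}$ can then choose independently inside each copy. For instance, with $\atermA_1=\appl{\varA}{\varA}$ and $\atermA_2=\ltermB+\ltermC$ where $\ltermB\neq\ltermC$ are variables, the term $\appl{\ltermB}{\ltermC}$ lies in $\LTermsOf{\subst{\atermA_1}{\varA}{\atermA_2}}$, whereas $\LTermsOf{\atermB}=\braces{\appl{\labs{\varA}{\appl{\varA}{\varA}}}{\ltermB},\appl{\labs{\varA}{\appl{\varA}{\varA}}}{\ltermC}}$ only $\bredRT$-reaches the two ``diagonal'' terms $\appl{\ltermB}{\ltermB}$ and $\appl{\ltermC}{\ltermC}$. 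A single simultaneous choice of $\ltermA_2$ cannot recover independent selections per occurrence, and it is at exactly this point that the inductive strategy breaks and a genuinely different idea is needed.
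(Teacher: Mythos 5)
Your ``proof attempt'' reaches the correct outcome: this statement is precisely the claim the paper states only in order to refute it (it is the erroneous Lemma 3.20 of \cite{vaux:alglam}, labelled \cref{claim:wrong} and immediately followed by ``But the latter claim is wrong!''), so the honest result of any attempt is what you produced --- a diagnosis of where the induction breaks, plus a counterexample. Your counterexample is the paper's own, modulo renaming: for $\atermA
=\appl*{\labs{\varA}{\appl{\classOf{\varA}}{\classOf{\varA}}}}*{\classOf{\varB}+\classOf{\varC}}
\ared\atermA'
=\appl*{\classOf{\varB}+\classOf{\varC}}*{\classOf{\varB}+\classOf{\varC}}$, the term $\appl{{\varB}}{{\varC}}\in\LTermsOf{\atermA'}$ is not a $\bredRT$-reduct of either element of $\LTermsOf{\atermA}$, exactly because duplication of the bound variable allows independent selections in the two copies of the sum, which a single prior selection cannot track. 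The one point you leave implicit, and which the paper emphasizes because it motivates the entire mashup construction that follows, is that in this counterexample $\atermA$ is not itself reachable from any pure $λ$-term --- there is no $\ltermA\in\LTerms$ with $\classOf{\ltermA}\aredRT\atermA$ --- so the correct, provable replacement for \cref{claim:wrong} must build in that reachability hypothesis rather than proceed one $\ared$-step at a time.
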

But the latter claim is wrong!
Consider, for instance, 
$\atermA
=\appl*{\labs{\varA}{\appl{\classOf{\varA}}{\classOf{\varA}}}}*{\classOf{\varB}+\classOf{\varC}}
\ared\atermA'
=\appl*{\classOf{\varB}+\classOf{\varC}}*{\classOf{\varB}+\classOf{\varC}}
$.
We have $\ltermA'=\appl{{\varB}}{{\varC}}\in\LTermsOf{\atermA'}$
but no term in $\LTermsOf{\atermA'}=
\{
\appl*{\labs{\varA}{\appl{{\varA}}{{\varA}}}}{{\varB}},
\appl*{\labs{\varA}{\appl{{\varA}}{{\varA}}}}{{\varC}}
\}
$ $β$-reduces to $\ltermA'$.
Note that, in this counter-example, there is no $\ltermA\in\LTerms$
such that $\classOf{\ltermA}\aredRT\atermA$:
somehow, we must exploit this additional hypothesis
to establish a correct version of \cref{claim:wrong}.

Reasoning on $\aredRT$ directly is difficult,
due to its definition as a reflexive and transitive closure.
The technique we propose involves the definition of a mixed-type relation $\ltermA \adesc \atermA$
between a $λ$-term $\ltermA$ and a term $\atermA\in\ATerms$:
intuitively, $\ltermA\adesc\atermA$ when $\atermA$ is obtained by pasting
together terms issued from various reductions of $\ltermA$,
and we say $\ltermA\adesc\atermA$ is a \definitive{mashup} of
such reductions.
In particular: $\ltermA\adesc\classOf{\ltermA'}$ as soon as 
$\ltermA\bredRT\ltermA'$;
and $\ltermA\adesc\atermA+\atermB$ as soon as
$\ltermA\adesc\atermA$ and $\ltermA\adesc\atermB$.
We then show that $\adesc$ is conservative over $\bredRT$
(\cref{lemma:lterm}) and that $\ltermA\adesc\atermA$ 
as soon as $\classOf{\ltermA}\aredRT\atermA$
(\cref{lemma:reflexivity,lemma:desc:ared}):
this ensures the conservativity of $\aredRT$ over $\bredRT$
(\cref{theorem:conservativity}).
Our whole approach thus rests on the careful definition of the mashup relation.
Among other requirements, it must behave well w.r.t. the structure of terms:
e.g., if $\ltermA\adesc\atermA$ then $\labs{\varA}{\ltermA}\adesc\labs{\varA}{\atermA}$.

\section{Mashup of $β$-reductions}

We define two relations
$\mathord{\sdesc}\subseteq\LTerms\times\STerms$
and 
$\mathord{\adesc}\subseteq\LTerms\times\ATerms$
by mutual induction, with the rules of \cref{fig:adesc}.
If $σ\in\ATerms$, we write $\support{σ}\subset\STerms$
for its support set.

\begin{figure}[t]
\begin{gather*}
   \begin{prooftree}
        \hypo{\ltermA \bredRT\varA}
        \infer1[\ruleVar]{\ltermA\sdesc\classOf{\varA}}
    \end{prooftree}
        \qquad
    \begin{prooftree}
        \hypo{\ltermA \bredRT\labs{\varA}{\ltermB}}
        \hypo{\ltermB \sdesc \atermB} 
        \infer2[\ruleAbs]{\ltermA\sdesc\labs{\varA}{\atermB}}
    \end{prooftree}
    \qquad
    \begin{prooftree}
        \hypo{\ltermA \bredRT \appl{\ltermB}\ltermC}
        \hypo{\ltermB \sdesc \atermB}
        \hypo{\ltermC \adesc \atermC}
        \infer3[\ruleApp]{\ltermA\sdesc\appl{\atermB}{\atermC}}
    \end{prooftree}
     \\[1em]
    \begin{prooftree}
        \infer0[\ruleZ]{\ltermA\adesc\zero}
    \end{prooftree}
     \qquad
    \begin{prooftree}
    \hypo{\ltermA\sdesc \atermA}
    \hypo{\ltermA\adesc \atermB}
    \infer2[\ruleSum]{\ltermA\adesc \coefA\sm\atermA+\atermB}
    \end{prooftree}
\end{gather*}
\caption{Inference rules for the mashup relations}
\label{fig:adesc}
\end{figure}

\begin{figure}[t]
\begin{gather*}
 \begin{prooftree}
        \hypo{\ltermA \sdesc \atermA}
        \infer1[\ruleSimple]{\ltermA \adesc \atermA}
    \end{prooftree}
    \\[1em]
    \begin{prooftree}
        \hypo{\ltermA \bredRT\labs{\varA}{\ltermB}}
        \hypo{\ltermB \adesc \atermB} 
        \infer2[\ruleAbsA]{\ltermA\adesc\labs{\varA}{\atermB}}
    \end{prooftree}
    \qquad
    \begin{prooftree}
        \hypo{\ltermA \bredRT\appl{\ltermB}\ltermC}
        \hypo{\ltermB \adesc \atermB}
        \hypo{\ltermC \adesc \atermC}
        \infer3[\ruleAppA]{\ltermA\adesc\appl{\atermB}{\atermC}}
    \end{prooftree}
    \qquad
    \begin{prooftree}
    \hypo{\ltermA\adesc \atermA}
    \hypo{\ltermA\adesc \atermB}
    \infer2[\ruleSumA]{\ltermA\adesc \coefA\sm\atermA+\atermB}
    \end{prooftree}
\end{gather*}
\caption{Admissible rules for the mashup relations}
\label{fig:admissible}
\end{figure}

\begin{lemma}\label{lemma:support}
    We have $\ltermA\adesc\atermA$ iff, for every $\atermA'\in\support{\atermA}$,
    $\ltermA\sdesc\atermA'$.
\end{lemma}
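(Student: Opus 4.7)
The statement essentially asserts that the $\adesc$ judgement is completely determined by which simple terms are in the support: $\ltermA\adesc\atermA$ holds exactly when $\ltermA\sdesc\atermA'$ holds for each simple $\atermA'$ appearing in the canonical decomposition of $\atermA$ as a linear combination of simple terms. I would prove the two implications separately, exploiting the fact that $\ATerms$ is freely generated as an $\R$-module by $\STerms$.

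For the forward direction, I would proceed by induction on the derivation of $\ltermA\adesc\atermA$. The base case $\ruleZ$ is vacuous, since $\support{\zero}=\emptyset$. For the step $\ruleSum$, we have $\atermA = \coefA\sm\atermA_0+\atermB$ with $\ltermA\sdesc\atermA_0$ and $\ltermA\adesc\atermB$. The crucial observation is that, because $\R$ is positive, $\support{\coefA\sm\atermA_0+\atermB}\subseteq\{\atermA_0\}\cup\support{\atermB}$: positivity forbids any cancellation between $\coefA\sm\atermA_0$ and a term of $\atermB$. Thus any $\atermA'\in\support{\atermA}$ is either $\atermA_0$, for which $\ltermA\sdesc\atermA_0$ is already assumed, or an element of $\support{\atermB}$, which is handled by the induction hypothesis.

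For the converse, I would use that each $\atermA\in\ATerms$ has a unique canonical decomposition $\atermA = \sum_{i=1}^{n}\coefA_i\sm\atermA_i$ with pairwise distinct $\atermA_i\in\support{\atermA}$ and $\coefA_i\neq 0$, and induct on $n$. If $n=0$ then $\atermA=\zero$ and we apply $\ruleZ$. If $n>0$, the induction hypothesis applied to $\sum_{i=2}^{n}\coefA_i\sm\atermA_i$ (whose support is $\{\atermA_2,\dotsc,\atermA_n\}\subseteq\support{\atermA}$) yields $\ltermA\adesc\sum_{i=2}^{n}\coefA_i\sm\atermA_i$; combining this with $\ltermA\sdesc\atermA_1$ via rule $\ruleSum$ gives $\ltermA\adesc\atermA$.

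The only non-routine ingredient is the use of positivity to pin down the support of a sum, which is essential: without it, $\coefA\sm\atermA_0$ could cancel part of $\atermB$, an $\atermA'\in\support{\atermA}$ could fail to appear in either component, and the forward direction would break. This already matches the philosophy of the paper, where positivity is what keeps the rewriting theory meaningful, so the argument is quite direct and I do not expect any real obstacle beyond carefully spelling out the free-module structure.
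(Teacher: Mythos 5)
Your proof is correct and takes essentially the same route as the paper: the forward implication by induction on the derivation, using the inclusion $\support{\coefA\sm\atermA_0+\atermB}\subseteq\{\atermA_0\}\cup\support{\atermB}$, and the converse by writing $\atermA$ as a linear combination of the elements of its support and inducting on the number of summands. However, your claim that positivity is the key ingredient here is mistaken: that inclusion holds in \emph{any} $\R$-module, since a simple term can have nonzero coefficient in $\coefA\sm\atermA_0+\atermB$ only if it has nonzero coefficient in $\coefA\sm\atermA_0$ or in $\atermB$ --- cancellation can only remove elements from a support, never add them --- so the forward direction cannot ``break'' in the absence of positivity, and indeed the paper's proof never invokes it. Where positivity genuinely matters is for the \emph{reverse} inclusion, ensuring $\atermA_0\in\support{\coefA\sm\atermA_0+\atermB}$ and $\support{\atermB}\subseteq\support{\coefA\sm\atermA_0+\atermB}$ when $\coefA\neq 0$; this is the observation the paper records just before \cref{lemma:desc:ared} and exploits there, not in the present lemma.
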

\begin{proof}
    The forward implication is done by induction on the derivation of $\ltermA\adesc\atermA$,
    noting that if $\atermA'\in\support{a\atermB+\atermC}$ with $\ltermA\sdesc\atermB$
    and $\ltermA\adesc\atermC$ then $\atermA'=\atermB$ or $\atermA'\in\support{\atermC}$.
    For the reverse implication, we can write $\atermA=\sum_{i=1}^n\coefA_i\sm\atermA_i$
    with $\atermA_i\in\support{\atermA}$ for $1\le i\le n$,
    and obtain a derivation of $\ltermA\adesc\atermA$ by induction on $n$.
\end{proof}

\begin{lemma}\label{lemma:admissible}
    The rules of \cref{fig:admissible} are admissible.
\end{lemma}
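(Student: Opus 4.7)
The plan is to derive each of the four rules from the basic inference rules of \cref{fig:adesc}, using \cref{lemma:support} as the bridge between $\sdesc$ and $\adesc$. Rule \ruleSimple\ is the easiest: given $\ltermA \sdesc \atermA$, I would first apply \ruleZ\ to get $\ltermA \adesc \zero$, and then \ruleSum\ with coefficient $\coefA = 1$ to derive $\ltermA \adesc 1\sm\atermA + \zero$, which equals $\atermA$ in the module $\ATerms$.

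For the other three rules, the key observation I want to exploit is that the linearity axioms of $\algeq$ make supports distribute nicely through the syntactic constructors: $\support{\labs{\varA}{\atermB}} = \{\labs{\varA}{\atermB'} \mid \atermB' \in \support{\atermB}\}$; $\support{\appl{\atermB}{\atermC}} = \{\appl{\atermB'}{\atermC} \mid \atermB' \in \support{\atermB}\}$ (note the asymmetry, since application is linear only in the function position); and $\support{\coefA\sm\atermA + \atermB} \subseteq \support{\atermA} \cup \support{\atermB}$.

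With this in hand, the strategy becomes uniform: apply the forward direction of \cref{lemma:support} to each $\adesc$ premise to extract a family of $\sdesc$ facts indexed by support elements, apply the corresponding base rule of \cref{fig:adesc} pointwise, and then invoke the reverse direction of \cref{lemma:support} to repackage the conclusion into a single $\adesc$ judgement. For \ruleAbsA, this amounts to extracting $\ltermB \sdesc \atermB'$ for every $\atermB' \in \support{\atermB}$, then using \ruleAbs\ to derive $\ltermA \sdesc \labs{\varA}{\atermB'}$, which covers all of $\support{\labs{\varA}{\atermB}}$. The argument for \ruleAppA\ is identical, except that rule \ruleApp\ keeps the premise $\ltermC \adesc \atermC$ as is without unfolding it; and \ruleSumA\ is an immediate consequence of the support inclusion applied to both hypotheses.

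I do not expect a genuine obstacle here: the only subtlety is the correct computation of supports under compound constructors, which follows mechanically from the module-compatibility axioms of $\algeq$. The whole proof is essentially a routing exercise between $\sdesc$ and $\adesc$ via \cref{lemma:support}.
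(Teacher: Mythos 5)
Your proof is correct and follows essentially the same route as the paper's: rule \ruleSimple\ via the observation $\atermA = 1\sm\atermA + \zero$, and the three remaining rules by transferring between $\sdesc$ and $\adesc$ through the support sets using \cref{lemma:support}. The support computations you make explicit (including the asymmetry of application, which is linear only in function position) are exactly the details the paper's terse proof leaves implicit, and they are accurate.
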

\begin{proof}
    For \ruleSimple,
    it is sufficient to observe that $\atermA=1\atermA+\zero$.
    For the other three rules, we reason on the support sets,
    using \cref{lemma:support}.
\end{proof}

\begin{lemma}[Reflexivity of $\sdesc$]\label{lemma:reflexivity}
    For every $\ltermA\in\LTerms$, $\ltermA\sdesc\classOf{\ltermA}$.
\end{lemma}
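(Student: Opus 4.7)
The plan is to proceed by straightforward induction on the structure of the $λ$-term $\ltermA$. In each case, we build a derivation of $\ltermA \sdesc \classOf{\ltermA}$ using the reflexivity of $\bredRT$ (i.e. $\ltermA \bredRT \ltermA$) to discharge the reduction premise of the relevant inference rule from \cref{fig:adesc}.

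In the variable case $\ltermA = \varA$, the rule \ruleVar{} applies directly, since $\varA \bredRT \varA$ yields $\varA \sdesc \classOf{\varA}$. In the abstraction case $\ltermA = \labs{\varA}{\ltermB}$, the induction hypothesis gives $\ltermB \sdesc \classOf{\ltermB}$; combined with $\labs{\varA}{\ltermB} \bredRT \labs{\varA}{\ltermB}$, rule \ruleAbs{} produces $\labs{\varA}{\ltermB} \sdesc \labs{\varA}{\classOf{\ltermB}} = \classOf{\labs{\varA}{\ltermB}}$. In the application case $\ltermA = \appl{\ltermB}{\ltermC}$, rule \ruleApp{} requires the mixed premises $\ltermB \sdesc \classOf{\ltermB}$ and $\ltermC \adesc \classOf{\ltermC}$; the first follows immediately from the induction hypothesis, while the second is obtained from the induction hypothesis $\ltermC \sdesc \classOf{\ltermC}$ via the admissible rule \ruleSimple{} of \cref{lemma:admissible}.

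There is no real obstacle here: once \cref{lemma:admissible} is in hand, all three cases are one-line derivations and the argument does not rely on any subtle property of $\algeq$ or of the reduction $\ared$. The only minor point of care is noticing the $\sdesc$/$\adesc$ mismatch in the third premise of rule \ruleApp{}, which is precisely why we need \ruleSimple{} to be admissible before proving this lemma.
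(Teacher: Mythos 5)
Your proof is correct and follows exactly the paper's argument: a structural induction on $\ltermA$, discharging each rule's reduction premise by reflexivity of $\bredRT$ and using the admissible rule \ruleSimple{} from \cref{lemma:admissible} to bridge the $\sdesc$/$\adesc$ mismatch in the application case. The paper states this in one line; your write-up is simply the expanded version of the same proof.
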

\begin{proof}
    By a straightforward induction on $\ltermA$, using the reflexivity of $\bredRT$
    and rule \ruleSimple.
\end{proof}

\begin{lemma}[Conservativity of $\adesc$]\label{lemma:lterm}
If $\ltermA\adesc\classOf{\ltermA'}$ then $\ltermA\bredRT\ltermA'$.
\end{lemma}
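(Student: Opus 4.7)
The plan is to first invoke \cref{lemma:support} to collapse the $\adesc$ hypothesis to $\sdesc$: since $\support{\classOf{\ltermA'}}=\braces{\classOf{\ltermA'}}$ is a singleton, the assumption $\ltermA\adesc\classOf{\ltermA'}$ is equivalent to $\ltermA\sdesc\classOf{\ltermA'}$. It therefore suffices to establish the stronger statement that, for all $\ltermA,\ltermA'\in\LTerms$, $\ltermA\sdesc\classOf{\ltermA'}$ implies $\ltermA\bredRT\ltermA'$.

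I would then proceed by induction on the structure of the target $\lambda$-term $\ltermA'$, exploiting the fact that the three $\sdesc$-rules are pairwise exclusive on the shape of their conclusion (since $\STerms$ is a free basis of $\ATerms$). When $\ltermA'=\varA$, the derivation must end with \ruleVar, which immediately provides $\ltermA\bredRT\varA$. When $\ltermA'=\labs{\varA}{\ltermB'}$, the last rule must be \ruleAbs, yielding some $\lambda$-term $\ltermB$ with $\ltermA\bredRT\labs{\varA}{\ltermB}$ and $\ltermB\sdesc\classOf{\ltermB'}$; the induction hypothesis gives $\ltermB\bredRT\ltermB'$, and the contextual closure of $\bredRT$ concludes. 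When $\ltermA'=\appl{\ltermB'}{\ltermC'}$, the last rule is \ruleApp, producing $\ltermA\bredRT\appl{\ltermB}{\ltermC}$, $\ltermB\sdesc\classOf{\ltermB'}$ and $\ltermC\adesc\classOf{\ltermC'}$ for some $\lambda$-terms $\ltermB,\ltermC$; applying the induction hypothesis to $\ltermB$ directly, and to $\ltermC$ after a further use of \cref{lemma:support}, yields $\ltermB\bredRT\ltermB'$ and $\ltermC\bredRT\ltermC'$, so that $\ltermA\bredRT\appl{\ltermB'}{\ltermC'}$ by context.

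The delicate point is the exclusivity claim underpinning each case: a $\sdesc$-derivation whose conclusion is $\labs{\varA}{\classOf{\ltermB'}}$, respectively $\appl{\classOf{\ltermB'}}{\classOf{\ltermC'}}$, must end with \ruleAbs, respectively \ruleApp, and must instantiate its premise so that $\atermB=\classOf{\ltermB'}$, respectively $\atermB=\classOf{\ltermB'}$ and $\atermC=\classOf{\ltermC'}$. This rests on the freeness of the module $\ATerms$ on $\STerms$, which guarantees that the outer constructor of a simple term and its immediate components are uniquely determined by the term itself. Beyond this structural bookkeeping, the argument is routine; the genuinely delicate combinatorial work has already been packaged into \cref{lemma:support}.
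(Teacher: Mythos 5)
Your proposal is correct and follows essentially the same route as the paper's own proof: reduce the $\adesc$ hypothesis to $\ltermA\sdesc\classOf{\ltermA'}$ via \cref{lemma:support}, then induct on the structure of $\ltermA'$ with a case analysis on the last rule of the $\sdesc$-derivation, using \cref{lemma:support} once more in the application case to handle the $\adesc$ premise on the argument. Your explicit remark that the case analysis rests on the freeness of $\ATerms$ over $\STerms$ (so that a simple term determines its outer constructor and immediate components) only makes precise what the paper leaves implicit.
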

\begin{proof}
    Note that $\classOf{\ltermA'}\in\STerms$, hence
    $\ltermA\sdesc\classOf{\ltermA'}$ by \cref{lemma:support}.
    The proof is then by induction on $\ltermA'$,
    inspecting the last rule of the derivation of $\ltermA\sdesc\classOf{\ltermA'}$:
    \begin{itemize}
        \item[\ruleVar]
            If $\ltermA\bredRT \varA$ and $\classOf{\ltermA'}=\classOf{\varA}$ 
            then we conclude directly since $\ltermA'=\varA$.
        \item[\ruleAbs]
            If $\ltermA\bredRT \labs{\varA}{\ltermB}$ and $\ltermB\sdesc\atermB$ 
            with $\classOf{\ltermA'}=\labs{\varA}{\atermB}$,
            then $\ltermA'=\labs{\varA}{\ltermB'}$ with $\atermB=\classOf{\ltermB'}$.
            By induction hypothesis, $\ltermB\bredRT\ltermB'$,
            hence $\ltermA\bredRT\ltermA'$.
        \item[\ruleApp]
            If $\ltermA\bredRT \appl{\ltermB}{\ltermC}$,
            $\ltermB\sdesc\atermB$ and $\ltermC\adesc\atermC$ 
            with $\classOf{\ltermA'}=\appl{\atermB}{\atermC}$,
            then $\ltermA'=\appl{\ltermB'}{\ltermC'}$
            with $\atermB=\classOf{\ltermB'}$
            and $\atermC=\classOf{\ltermC'}$.
            In particular $\atermC\in\STerms$, hence
            $\ltermC\sdesc\atermC$ by \cref{lemma:support}.
            By induction hypothesis, $\ltermB\bredRT\ltermB'$
            and $\ltermC\bredRT\ltermC'$,
            hence $\ltermA\bredRT\ltermA'$. \qedhere
    \end{itemize}
\end{proof}

\begin{lemma}[Compatibility with $\bred$]\label{lemma:bred:desc}
    If $\ltermA\bredRT\ltermA'\sdesc\atermA$ then $\ltermA\sdesc\atermA$.
    Similarly, if $\ltermA\bredRT\ltermA'\adesc\atermA$ then $\ltermA\adesc\atermA$.
\end{lemma}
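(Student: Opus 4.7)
The plan is to proceed by mutual induction on the derivations of $\ltermA' \sdesc \atermA$ and $\ltermA' \adesc \atermA$. The crucial observation is that in every rule of \cref{fig:adesc}, the $λ$-term appearing on the left of $\sdesc$ or $\adesc$ in the conclusion also appears in the premises only in one of two forms: either as the source of a $\bredRT$-step (rules \ruleVar, \ruleAbs, \ruleApp), or as the same $λ$-term on the left of $\sdesc$ or $\adesc$ in a premise (rule \ruleSum, and vacuously \ruleZ).

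For the first statement, I would inspect the last rule of the derivation of $\ltermA' \sdesc \atermA$. In each of \ruleVar, \ruleAbs, \ruleApp, the premise $\ltermA' \bredRT X$ can be replaced by $\ltermA \bredRT X$, which follows from composing $\ltermA \bredRT \ltermA'$ with $\ltermA' \bredRT X$; the other premises concern subterms distinct from $\ltermA'$ and remain unchanged, so reapplying the same rule yields $\ltermA \sdesc \atermA$. For the second statement, rule \ruleZ is immediate since $\ltermA \adesc \zero$ holds for any $\ltermA$; and for rule \ruleSum, the premises $\ltermA' \sdesc \atermB$ and $\ltermA' \adesc \atermC$ become $\ltermA \sdesc \atermB$ and $\ltermA \adesc \atermC$ by the first and second induction hypotheses respectively, after which reapplying \ruleSum gives the conclusion.

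I do not expect any serious obstacle here: the mashup relation is precisely designed so that a $\bred$-reduction at the root can be absorbed without disturbing the subderivations, and the argument amounts to a transparent manipulation of derivation trees.
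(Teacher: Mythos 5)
Your proof is correct and matches the paper's own argument: the first implication is handled by inspecting the last rule of the derivation of $\ltermA'\sdesc\atermA$ and composing the $\bredRT$-premise with $\ltermA\bredRT\ltermA'$ (transitivity of $\bredRT$), while the second follows by induction on the derivation of $\ltermA'\adesc\atermA$, reusing the first part in the \ruleSum{} case. Your framing as a mutual induction is a harmless repackaging of the same idea, since the subterm premises of \ruleAbs{} and \ruleApp{} are indeed left untouched.
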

\begin{proof}
    For the first implication, it is sufficient to inspect the last rule
    of the derivation $\ltermA'\sdesc\atermA$, and use the transitivity 
    of $\bredRT$.
    The second implication follows directly by induction on the derivation of 
    $\ltermA'\adesc\atermA$.
\end{proof}

\section{Conservativity of algebraic reduction}

\begin{lemma}[Substitution lemma]\label{lemma:subst}
    If $\ltermA\adesc\atermA$ and $\ltermC\adesc\atermC$
    then $\subst{\ltermA}{\varA}{\ltermC}\adesc\subst{\atermA}{\varA}{\atermC}$.
\end{lemma}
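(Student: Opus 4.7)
The plan is to proceed by mutual induction on the derivations of $\ltermA \sdesc \atermA$ and $\ltermA \adesc \atermA$, establishing jointly that both imply $\subst{\ltermA}{\varA}{\ltermC} \adesc \subst{\atermA}{\varA}{\atermC}$. I expect that the conclusion must be stated with $\adesc$ in both cases, since simplicity is not preserved by substitution: for instance, $\subst{\classOf{\varA}}{\varA}{\atermC} = \atermC$ need not be simple. Consequently, the $\sdesc$ cases in the induction will have to degrade to $\adesc$ via the admissible rule \ruleSimple{} when this is needed.

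For the algebraic cases \ruleZ{} and \ruleSum{}, I expect no difficulty: since substitution of $\atermC$ for $\varA$ on an algebraic term distributes over $\zero$, sums and scalar multiples, applying the induction hypothesis to the premises and combining them with the admissible rule \ruleSumA{} should conclude immediately.

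For the structural cases \ruleAbs{} and \ruleApp{}, I would pick a bound variable fresh for $\varA$ and for the free variables of $\ltermC$, apply the induction hypothesis to the premises, and conclude with the admissible rules \ruleAbsA{} and \ruleAppA{}; here I use that $\bredRT$ is stable under substitution on $λ$-terms (so that $\ltermA\bredRT\labs{\varB}{\ltermB}$ implies $\subst{\ltermA}{\varA}{\ltermC}\bredRT\labs{\varB}{\subst{\ltermB}{\varA}{\ltermC}}$, and similarly in the application case) and that the syntactic constructs commute with substitution on algebraic terms.

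The only real obstacle, and the single point at which the hypothesis $\ltermC \adesc \atermC$ is actually exploited, is the base case \ruleVar. There, $\ltermA \bredRT \varB$ with $\atermA = \classOf{\varB}$. If $\varB \neq \varA$, the substitution is trivial on both sides and rules \ruleVar{} and \ruleSimple{} conclude. If $\varB = \varA$, then $\subst{\atermA}{\varA}{\atermC} = \atermC$ while $\subst{\ltermA}{\varA}{\ltermC} \bredRT \ltermC$; applying \cref{lemma:bred:desc} to the hypothesis $\ltermC \adesc \atermC$ then yields $\subst{\ltermA}{\varA}{\ltermC} \adesc \atermC$ as desired. That this last step goes through is essentially what motivates the design of mashup as a relation that is robust under $\bredRT$-prefixing on the $λ$-term side, and it is the reason why the lemma is stated with $\ltermC \adesc \atermC$ as an assumption rather than with some more syntactic relation.
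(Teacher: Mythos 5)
Your proposal is correct and follows essentially the same route as the paper's own proof: a mutual induction on the derivations of $\ltermA\sdesc\atermA$ and $\ltermA\adesc\atermA$ with an $\adesc$ conclusion in both cases, the structural cases discharged by stability of $\bredRT$ under substitution together with the admissible rules \ruleAbsA{}, \ruleAppA{} and \ruleSumA{}, and the key case $\varB=\varA$ in \ruleVar{} handled exactly as in the paper by applying \cref{lemma:bred:desc} to the hypothesis $\ltermC\adesc\atermC$. Your remark about why simplicity cannot be preserved (and hence why the $\sdesc$ variant must conclude with $\adesc$) matches the design of the paper's statement.
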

\begin{proof}
    We prove the result, together with the variant assuming 
    $\ltermA\sdesc\atermA$ instead of $\ltermA\adesc\atermA$,
    by induction on the derivations of those judgements.
    \begin{itemize}
        \item[\ruleVar]
            If $\ltermA\bredRT \varB$ and $\atermA=\classOf{\varB}$ then:
            \begin{itemize}
                \item if $\varB=\varA$, then
                    $\subst{\ltermA}{\varA}{\ltermC}
                    \bredRT \subst{\varA}{\varA}{\ltermC}
                    =\ltermC
                    \adesc\atermC$
                    and we obtain
                    $\subst{\ltermA}{\varA}{\ltermC}
                    \adesc\atermC
                    =\subst{\atermA}{\varA}{\atermC}$
                    by \cref{lemma:bred:desc};
                \item otherwise,
                    $\subst{\ltermA}{\varA}{\ltermC}
                    \bredRT \subst{\varB}{\varA}{\ltermC}
                    =\varB$,
                    hence
                    $\subst{\ltermA}{\varA}{\ltermC}
                    \adesc\classOf{\varB}
                    =\subst{\atermA}{\varA}{\atermC}$
                    by \ruleVar.
            \end{itemize}
        \item[\ruleAbs]
            If $\ltermA\bredRT \labs{\varB}{\ltermB}$ and $\ltermB\sdesc\atermB$ 
            with $\atermA=\labs{\varB}{\atermB}$ (choosing $\varB\not=\varA$ and 
            $\varB$ not free in $\ltermC$ nor in $\atermC$),
            then $\subst{\ltermA}{\varA}{\ltermC}
            \bredRT\labs{\varB}{\subst{\ltermB}{\varA}{\ltermC}}$
            and, by induction hypothesis
            $\subst{\ltermB}{\varA}{\ltermC}\adesc\subst{\atermB}{\varA}{\atermC}$:
            we obtain
            $\subst{\ltermA}{\varA}{\ltermC}
            \adesc\labs{\varB}{\subst{\atermB}{\varA}{\atermC}}
            =\subst{\atermA}{\varA}{\atermC}$
            by $\ruleAbsA$.
        \item[\ruleApp]
            If $\ltermA\bredRT \appl{\ltermB_1}{\ltermB_2}$,
            $\ltermB_1\sdesc\atermB_1$ and $\ltermB_2\adesc\atermB_2$,
            with $\atermA=\appl{\atermB_1}{\atermB_2}$,
            then we have $\subst{\ltermA}{\varA}{\ltermC}
            \bredRT\appl{\subst{\ltermB_1}{\varA}{\ltermC}}{\subst{\ltermB_2}{\varA}{\ltermC}}$
            and, by induction hypothesis,
            $\subst{\ltermB_1}{\varA}{\ltermC}\adesc\subst{\atermB_1}{\varA}{\atermC}$ and
            $\subst{\ltermB_2}{\varA}{\ltermC}\adesc\subst{\atermB_2}{\varA}{\atermC}$:
            we obtain
            $\subst{\ltermA}{\varA}{\ltermC}
            \adesc\appl{\subst{\atermB_1}{\varA}{\atermC}}{\subst{\atermB_2}{\varA}{\atermC}}
            =\subst{\atermA}{\varA}{\ltermC}$
            by $\ruleAppA$.
        \item[\ruleZ]
            If $\atermA=\zero$ then $\subst{\atermA}{\varA}{\atermC}=\zero$
            and we conclude directly, by \ruleZ.
        \item[\ruleSum]
            If $\atermA=\coefA\sm\atermB_1+\atermB_2$ 
            with $\ltermA\sdesc\atermB_1$ and $\ltermA\adesc\atermB_2$,
            then,
            by induction hypothesis, 
            $\subst{\ltermA}{\varA}{\ltermC}\adesc\subst{\atermB_1}{\varA}{\atermC}$ and
            $\subst{\ltermA}{\varA}{\ltermC}\adesc\subst{\atermB_2}{\varA}{\atermC}$,
            hence
            $\subst{\ltermA}{\varA}{\ltermC}
            \adesc\coefA\sm\subst{\atermB_1}{\varA}{\atermC}+\subst{\atermB_2}{\varA}{\atermC}
            =\subst{\atermA}{\varA}{\atermC}$
            by $\ruleAppA$. \qedhere
    \end{itemize}
\end{proof}

Note that, by positivity, if $\atermA=\coefA\sm\atermB+\atermC$ 
with $\atermB\in\STerms$ and $\coefA\not=0$,
then $\atermB\in\support{\atermA}\supseteq\support{\atermC}$.

\begin{lemma}[Compatibility with $\ared$]\label{lemma:desc:ared}
    Let $\ltermA\in\LTerms$ and $\atermA'\in\ATerms$.
    For every $\atermA\in\STerms$ such that 
    $\ltermA\sdesc\atermA\sred\atermA'$ 
    (resp. every $\atermA\in\ATerms$ such that 
    $\ltermA\adesc\atermA\ared\atermA'$),
    we have $\ltermA\adesc\atermA'$.
\end{lemma}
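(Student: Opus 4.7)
The plan is to carry out a mutual induction on the derivations of $\atermA\sred\atermA'$ (for the simple case) and $\atermA\ared\atermA'$ (for the algebraic case), exploiting the admissible rules of \cref{fig:admissible}, the substitution lemma, and the positivity observation stated just above the lemma. The mutual nature is forced by the fact that a contextual $\sred$-step may use an $\ared$-step in the right-hand subterm of an application.

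For the simple case, the hypothesis $\ltermA\sdesc\atermA$ with $\atermA\in\STerms$ allows inversion driven by the shape of $\atermA$, since only rules \ruleVar{}, \ruleAbs{} and \ruleApp{} produce $\sdesc$-derivations, each with a distinct syntactic shape for the right-hand side. So in each contextual case of $\sred$ --- under an abstraction, on the left of an application, on the right of an application --- I would invert the derivation of $\ltermA\sdesc\atermA$ to obtain $\lambda$-terms $\ltermB$ (and $\ltermC$) together with the underlying $\beta$-reductions, then apply the induction hypothesis to the reducing subterm, and conclude with one of the admissible rules \ruleAbsA{} or \ruleAppA{} of \cref{lemma:admissible}. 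The base case is the $\beta$-step $\appl{\labs{\varB}{\atermB}}{\atermC}\sred\subst{\atermB}{\varB}{\atermC}$: two successive inversions (first by \ruleApp{}, then by \ruleAbs{}) yield $\ltermA\bredRT\appl{\labs{\varB}{\ltermB_1}}{\ltermC_1}$ with $\ltermB_1\sdesc\atermB$ and $\ltermC_1\adesc\atermC$, so one more $\beta$-step gives $\ltermA\bredRT\subst{\ltermB_1}{\varB}{\ltermC_1}$. Coercing $\ltermB_1\sdesc\atermB$ into $\ltermB_1\adesc\atermB$ via the admissible rule \ruleSimple{}, \cref{lemma:subst} yields $\subst{\ltermB_1}{\varB}{\ltermC_1}\adesc\subst{\atermB}{\varB}{\atermC}$, and \cref{lemma:bred:desc} transfers this mashup along the $\beta$-reductions back to $\ltermA$.

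The algebraic case is then essentially immediate by reduction to the simple one. Writing $\atermA=\coefA\sm\atermB+\atermC$ and $\atermA'=\coefA\sm\atermB'+\atermC$ with $\atermB\sred\atermB'$ and $\coefA\not=0$, the positivity observation gives $\atermB\in\support{\atermA}$ and $\support{\atermC}\subseteq\support{\atermA}$. \cref{lemma:support} then provides $\ltermA\sdesc\atermB$ and $\ltermA\adesc\atermC$; the induction hypothesis on the simple case yields $\ltermA\adesc\atermB'$; and the admissible rule \ruleSumA{} combines $\ltermA\adesc\atermB'$ with $\ltermA\adesc\atermC$ into $\ltermA\adesc\coefA\sm\atermB'+\atermC=\atermA'$.

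The main obstacle is the $\beta$-base case of the simple induction: it is the only place where the substitution lemma intervenes, where two layers of inversion must be chained, and that coordinates essentially all the previously developed infrastructure (substitution, compatibility with $\bredRT$, admissibility of \ruleSimple{}, and a concluding $\beta$-step from the extracted redex). Beyond this, the algebraic case is the unique point where positivity of $\R$ is actually used, so the proof would break precisely there for non-positive semirings, consistently with the already-noted inconsistency of $\aeq$ in that setting.
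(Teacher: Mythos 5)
Your proposal is correct and follows essentially the same route as the paper's proof: induction on the derivation of the reduction, inversion of the $\sdesc$-derivation in each contextual case, the double inversion plus \cref{lemma:subst} and \cref{lemma:bred:desc} in the $β$-redex base case, and positivity with \cref{lemma:support} and rule \ruleSumA{} in the algebraic case. The only cosmetic difference is that you coerce $\ltermB_1\sdesc\atermB$ into $\ltermB_1\adesc\atermB$ via \ruleSimple{} before invoking the substitution lemma, whereas the paper applies the $\sdesc$-variant of that lemma directly; both are equally valid.
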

\begin{proof}
    The proof is by induction on the definition of the reduction
    $\atermA\sred\atermA'$ or $\atermA\ared\atermA'$.
    \begin{itemize}
        \item
            If $\atermA=\appl*{\labs{\varA}{\atermB}}{\atermC}\in\STerms$
            and $\atermA'=\subst{\atermB}{\varA}{\atermC}$,
            then the derivation of $\ltermA\sdesc\atermA$
            must be of the form
            \[\begin{prooftree}
                \hypo{\ltermA\bredRT\appl{\ltermB}{\ltermC}}
                \hypo{\ltermB\bredRT\labs{\varA}{\ltermB'}}
                \hypo{\ltermB'\sdesc\atermB}
                \infer2[\ruleAbs]{\ltermB\sdesc\labs{\varA}{\atermB}}
                \hypo{\ltermC\adesc\atermC}
                \infer3[\ruleApp]{\ltermA\sdesc\appl*{\labs{\varA}{\atermB}}{\atermC}}
            \end{prooftree}\;.\]
            By \cref{lemma:subst}, we have 
            $\subst{\ltermB'}{\varA}{\ltermC}\adesc\atermA'$.
            Moreover,
            $\ltermA\bredRT\appl{\ltermB}{\ltermC}
            \bredRT\appl*{\labs{\varA}{\ltermB'}}{\ltermC}
            \bred\subst{\ltermB'}{\varA}{\ltermC}$
            and we obtain
            $\ltermA\adesc\atermA'$ by \cref{lemma:bred:desc}.
        \item 
            If $\atermA=\labs{\varA}{\atermB}$
            and $\atermA'=\labs{\varA}{\atermB'}$
            with $\atermB\sred\atermB'$,
            then the derivation of $\ltermA\sdesc\atermA$
            must be of the form
            \[\begin{prooftree}
                \hypo{\ltermA\bredRT\labs{\varA}{\ltermB}}
                \hypo{\ltermB\sdesc\atermB}
                \infer2[\ruleAbs]{\ltermA\sdesc\labs{\varA}{\atermB}}
            \end{prooftree}\;.\]
            We obtain  $\ltermB\adesc\atermB'$ by induction hypothesis,
            and we conclude by \ruleAbsA.
        \item 
            If $\atermA=\appl{\atermB}{\atermC}$
            and $\atermA'=\appl{\atermB'}{\atermC'}$
            with either $\atermB\sred\atermB'$ and $\atermC=\atermC'$,
            or  $\atermB=\atermB'$ and $\atermC\ared\atermC'$,
            then the derivation of $\ltermA\sdesc\atermA$
            must be of the form
            \[\begin{prooftree}
                \hypo{\ltermA\bredRT\appl{\ltermB}{\ltermC}}
                \hypo{\ltermB\sdesc\atermB}
                \hypo{\ltermC\adesc\atermC}
                \infer3[\ruleApp]{\ltermA\sdesc\appl{\atermB}{\atermC}}
            \end{prooftree}\;.\]
            We obtain  $\ltermB\adesc\atermB'$ and  $\ltermC\adesc\atermC'$ 
            by induction hypothesis,
            and we conclude by \ruleAppA.
        \item 
            If $\atermA=\coefA\sm\atermB+\atermC$
            and $\atermA'=\coefA\sm\atermB'+\atermC$
            with $\atermB\sred\atermB'$ and $\coefA\not=0$,
            then we have already observed that 
            $\atermB\in\support{\atermA}$ and
            $\support{\atermC}\subseteq\support{\atermA}$.
            Since $\ltermA\adesc\atermA$,
            we obtain $\ltermA\sdesc\atermB$
            and $\ltermA\adesc\atermC$
            by \cref{lemma:support}.
            The induction hypothesis gives $\ltermA\adesc\atermB'$,
            hence
            $\ltermA\adesc\coefA\sm\atermB'+\atermC=\atermA'$
            by \ruleSumA. \qedhere
    \end{itemize}
\end{proof}

\begin{theorem}[Conservativity of $\aredRT$]\label{theorem:conservativity}
    If $\classOf{\ltermA}\aredRT\classOf{\ltermB}$
    then $\ltermA\bredRT\ltermB$.
\end{theorem}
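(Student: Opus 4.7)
The plan is to chain the lemmas that have just been proved. Given a reduction sequence $\classOf{\ltermA} = \atermA_0 \ared \atermA_1 \ared \dots \ared \atermA_n = \classOf{\ltermB}$, I want to transport the mashup relation along it, starting from a trivial mashup $\ltermA \adesc \classOf{\ltermA}$ and ending with a mashup $\ltermA \adesc \classOf{\ltermB}$ that witnesses a genuine $β$-reduction $\ltermA \bredRT \ltermB$.

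First, I would set up the base of the induction. By \cref{lemma:reflexivity}, $\ltermA \sdesc \classOf{\ltermA}$, and using the admissible rule \ruleSimple from \cref{lemma:admissible} we lift this to $\ltermA \adesc \classOf{\ltermA}$. (Alternatively, one can use \ruleSum together with $\atermA = 1\sm\atermA+\zero$ directly, exactly as in the proof of \cref{lemma:admissible}.) This gives the invariant $\ltermA \adesc \atermA_0$ at the starting point.

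Next, I would perform an induction on $n$, the length of the $\ared$-chain. The inductive step is supplied by \cref{lemma:desc:ared}: if $\ltermA \adesc \atermA_i$ and $\atermA_i \ared \atermA_{i+1}$, then $\ltermA \adesc \atermA_{i+1}$. After $n$ applications, I conclude $\ltermA \adesc \atermA_n = \classOf{\ltermB}$. Finally, \cref{lemma:lterm} (conservativity of $\adesc$) converts this mashup statement back into ordinary $β$-reduction, yielding $\ltermA \bredRT \ltermB$.

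There is no real obstacle at this stage; all the genuine content has been absorbed into the substitution lemma (\cref{lemma:subst}) and the compatibility lemma (\cref{lemma:desc:ared}), whose proofs critically use positivity of $\R$ via the observation on supports stated just before \cref{lemma:desc:ared}. The theorem itself is then a short induction on the reduction length, packaging together reflexivity of $\sdesc$, compatibility with $\ared$, and conservativity of $\adesc$.
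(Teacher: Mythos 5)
Your proposal is correct and follows exactly the paper's own argument: establish $\ltermA\adesc\classOf{\ltermA}$ via \cref{lemma:reflexivity} and rule \ruleSimple, iterate \cref{lemma:desc:ared} along the $\ared$-chain, and conclude with \cref{lemma:lterm}. The only cosmetic difference is that you make the induction on the chain length explicit where the paper simply says ``by iterating''.
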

\begin{proof}
    Assume $\classOf{\ltermA}\aredRT\classOf{\ltermB}$.
    By \cref{lemma:reflexivity} and \ruleSimple, we have 
    $\ltermA\adesc\classOf{\ltermA}$.
    By iterating \cref{lemma:desc:ared}, we deduce
    $\ltermA\adesc\classOf{\ltermB}$.
    We conclude by \cref{lemma:lterm}.
\end{proof}

\begin{corollary}[Conservativity of $\aeq$]
    If $\R$ is positive then $\classOf{\ltermA}\aeq\classOf{\ltermB}$
    iff $\ltermA\beq\ltermB$.
\end{corollary}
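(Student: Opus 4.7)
The forward direction is easy: any $\beta$-step $\ltermA\bred\ltermA'$ lifts (by induction on the reduction context) to $\classOf{\ltermA}\sred\classOf{\ltermA'}$, and hence to $\classOf{\ltermA}\ared\classOf{\ltermA'}$ by choosing coefficient $1$ and residue $\zero$ in the definition of $\ared$; iterating and closing under symmetry yields $\beq\;\subseteq\;\aeq$.

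For the reverse direction, my plan is to combine \cref{theorem:conservativity} with confluence of $\ared$, which holds in prior work without requiring positivity. Confluence reduces convertibility to joinability, so from $\classOf{\ltermA}\aeq\classOf{\ltermB}$ we obtain some $\atermC\in\ATerms$ with $\classOf{\ltermA}\aredRT\atermC$ and $\classOf{\ltermB}\aredRT\atermC$. \cref{lemma:reflexivity} combined with \ruleSimple yields $\ltermA\adesc\classOf{\ltermA}$, and iterating \cref{lemma:desc:ared} along the reduction gives $\ltermA\adesc\atermC$; symmetrically, $\ltermB\adesc\atermC$. The goal reduces to extracting from $\atermC$ a single $\lambda$-term to which both $\ltermA$ and $\ltermB$ $\bredRT$-reduce, from which $\ltermA\beq\ltermB$ follows.

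To do so, I would select any $\ltermC\in\LTermsOf{\atermC}$ (the finite set of $\lambda$-terms obtained by keeping one representative per sum occurring in $\atermC$) and prove the following claim by mutual structural induction on simple/algebraic targets: whenever $\ltermA\sdesc\atermA'$ (resp. $\ltermA\adesc\atermA'$) and $\ltermC'\in\LTermsOf{\atermA'}$, we have $\ltermA\bredRT\ltermC'$. The $\sdesc$ case proceeds by case analysis on the last rule of the derivation: the variable case is immediate, the abstraction case recurses on the simple body, and the application case recurses on the simple function and, via the algebraic version of the claim, on the algebraic argument. The $\adesc$ case reduces to the $\sdesc$ case by using \cref{lemma:support} to locate a support element $\atermE$ with $\ltermC'\in\LTermsOf{\atermE}$. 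Applying the claim to both $\ltermA\adesc\atermC$ and $\ltermB\adesc\atermC$ with the same chosen $\ltermC$ gives $\ltermA,\ltermB\bredRT\ltermC$.

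The main obstacle is ensuring $\LTermsOf{\atermC}\neq\emptyset$, which would fail if some application inside $\support{\atermC}$ held $\zero$ as its argument (or if $\atermC=\zero$). I would rule this out via a positivity-based healthiness invariant propagated along $\aredRT$: the initial term $\classOf{\ltermA}$ carries a $\lambda$-term class in every nested argument position, which is nonzero; a single $\ared$-step from a healthy term remains healthy, because positivity forbids the additive cancellations that would be required to push $\zero$ into an argument position; induction on reduction length then transports this invariant to $\atermC$, so $\LTermsOf{\atermC}$ is inhabited and the extraction succeeds.
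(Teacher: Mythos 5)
Your forward direction is fine, and the first half of your reverse direction coincides with the paper's: both invoke confluence of $\ared$ to get a common reduct of $\classOf{\ltermA}$ and $\classOf{\ltermB}$. Your extraction claim --- if $\ltermA\adesc\atermA'$ and $\ltermC'\in\LTermsOf{\atermA'}$ then $\ltermA\bredRT\ltermC'$ --- is also correct: it generalizes \cref{lemma:lterm} (the case $\atermA'=\classOf{\ltermA'}$, since $\LTermsOf{\classOf{\ltermA'}}=\{\ltermA'\}$), its proof by induction on derivations goes through (the \ruleSum{} case splits on whether the selected support element is the simple summand or lies in $\support{\atermC}$, using positivity and \cref{lemma:support}), and it is exactly the repaired form of \cref{claim:wrong} that the paper alludes to. Where you genuinely diverge is the last step: the paper never extracts a $λ$-term from the common reduct $\atermC$; instead it cites the full-development result of the prior work \cite[Lemma 3.23]{vaux:alglam} to reduce $\atermC$ further, to $\full[k]{\classOf{\ltermA}}=\classOf{\full[k]{\ltermA}}$, which \emph{is} the class of a $λ$-term, and then applies \cref{theorem:conservativity} to $\classOf{\ltermB}\aredRT\classOf{\full[k]{\ltermA}}$. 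That detour exists precisely to avoid the step where your proof has a hole.

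The hole is your ``healthiness'' invariant. Positivity forbids additive cancellation, but it does not forbid zero divisors, and $\ared$ multiplies coefficients: in a step $\coefA\sm\atermB+\atermC\ared\coefA\sm\atermB'+\atermC$ the coefficient $\coefA$ distributes over the coefficients of $\atermB'$, which come from inside $\atermB$ (from argument positions, via substitution). Take $\R=\mathbf B\times\mathbf B$ with $\mathbf B$ the Boolean semiring --- this $\R$ is positive --- and let $\coefA=(1,0)$, $\coefB=(0,1)$, so $\coefA\coefB=0$. The term $\atermA=\coefA\sm\appl*{\labs{\varA}{\classOf{\varA}}}*{\coefB\sm\classOf{\varB}}$ is healthy on any reasonable reading: it is nonzero, every argument position is nonzero, and $\LTermsOf{\atermA}\not=\emptyset$. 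Yet a single $β$-step gives $\atermA\ared\coefA\sm\pars{\coefB\sm\classOf{\varB}}=\pars{\coefA\coefB}\sm\classOf{\varB}=\zero$, with no additive cancellation anywhere. So ``a single $\ared$-step from a healthy term remains healthy'' is false as stated, and your induction on reduction length collapses. To repair it, the invariant would have to constrain not just the shape of reachable terms but also which coefficients can occur where (the term $\atermA$ above is presumably not reachable from a pure class, but proving that is the real work); alternatively you could assume $\R$ has no zero divisors, in which case your argument does go through, but then you prove less than the corollary claims. This missing inhabitation argument is the genuinely hard content, and it is exactly what the paper's appeal to \cite[Lemma 3.23]{vaux:alglam} replaces.
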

\begin{proof}
    Assuming $\classOf{\ltermA}\aeq\classOf{\ltermB}$, the confluence
    of $\ared$ ensures that there exist $\atermA\in\ATerms$ and $k\in\N$,
    such that $\classOf{\ltermA}\ared^k\atermA$ and $\classOf{\ltermB}\aredRT\atermA$.
    It follows \cite[Lemma 3.23]{vaux:alglam} that 
    $\atermA\aredRT\full[k]{\classOf{\ltermA}}$
    where $\full{\atermB}$ is the term obtained by reducing 
    all the redexes of $\atermB$ simultaneously.
    Observing that $\full{\classOf{\ltermA}}=\classOf{\full{\ltermA}}$,
    we obtain $\classOf{\ltermB}\aredRT\classOf{\full[k]{\ltermA}}$
    hence $\ltermB\bredRT\full[k]{\ltermA}$ by \cref{theorem:conservativity},
    which concludes the proof.
\end{proof}

\printbibliography

\end{document}